\newif\ifabstract
\newif\iffull
\definecolor{goldButNotOld}{HTML}{F26035}
\definecolor{orangeLikeAFamousPerson}{HTML}{D35400}
\newtheorem{theorem}{Theorem}
\newtheorem{remark}{\textbf{Remark}}[section]
\newtheorem{assumption}{\textbf{Assumption}}
\newcommand{\TORA}{\texttt{TORA}\xspace}
\newcommand{\cd}{\texttt{CD}\xspace}
\newcommand{\probName}{\texttt{DCP}\xspace}
\newcommand{\algName}{\texttt{LARA}\xspace}
\newcommand{\E}{\mathbb{E}}
\newcommand{\myparskip}{3pt}
\newtcolorbox{visionbox}[2][]{%
    colback=cyan!10,  
    coltitle=black,
    colframe=cyan!60,  
    fonttitle=\bfseries,
    title=#2, 
    rounded corners=southeast,
    boxrule=0pt,
    enhanced,
    drop fuzzy shadow,
    #1, 
    top=3pt, bottom=2pt, left=3pt, right=3pt
}
\begin{document}

\title{Near-Optimal Emission-Aware Online Ride Assignment Algorithm for Peak Demand Hours}

\author{
Ali~Zeynali\thanks{University of Massachusetts Amherst. Email: {\tt azeynali@cs.umass.edu}.} \and
Mahsa~Sahebdel\thanks{University of Massachusetts Amherst. Email: {\tt msahebdelala@umass.edu }.}\and 
Noman~Bashir\thanks{MIT. Email: {\tt nbashir@mit.edu }.}\and 
Ramesh~K.~Sitaraman\thanks{University of Massachusetts Amherst \& Akamai Technologies. Email: {\tt ramesh@cs.umass.edu}.}\and
Mohammad~Hajiesmaili\thanks{University of Massachusetts Amherst. Email: {\tt hajiesmaili@cs.umass.edu}.} \and
}

\date{} 

\begin{titlepage}
\maketitle

\thispagestyle{empty}

\begin{abstract}
Ridesharing has experienced significant global growth over the past decade and is becoming an integral component of modern transportation systems. However, despite their benefits, ridesharing platforms face fundamental inefficiencies that contribute to negative environmental impacts. A prominent source of such inefficiency is the ``deadhead miles''—the distance traveled by vehicles without passengers between trips—which accounts for a substantial portion of carbon emissions. This issue becomes especially severe during high-demand periods, when the volume of ride requests exceeds the available driver supply, leading to suboptimal rider-to-driver assignments, longer deadhead trips, and increased emissions. Although limiting these unproductive miles can reduce emissions, doing so may increase passenger wait times due to limited  driver availability, thereby degrading the overall service experience. As ridesharing continues to scale, there is a critical need for environment-aware solutions that jointly minimize emissions and maintain high-quality service, particularly in terms of rider wait times.

In this paper, we introduce \algName, an online rider-to-driver assignment algorithm that dynamically adjusts the maximum allowable distance between rider and drivers and assigns ride requests accordingly. While \algName is applicable in general settings, it is particularly effective during peak demand periods, achieving reductions in both emissions and wait times. We provide theoretical guarantees showing that \algName achieves near-optimal performance in online environments, with respect to an optimal offline benchmark. Beside our theoretical analysis, our empirical evaluations on both synthetic and real-world datasets show that \algName achieves up to a $34\%$ reduction in carbon emissions and up to a $50\%$ decrease in rider wait times, compared to state-of-the-art baselines. While prior work has explored emission-aware ride assignment, \algName is, to our knowledge, the first algorithm to offer both rigorous theoretical guarantees and strong empirical performance.
\end{abstract}

\end{titlepage}

\section{Introduction}
\label{sec:intro}

Ridesharing services have revolutionized urban mobility by providing convenient, on-demand transportation and have seen widespread global adoption~\cite{wenzel2019travel}. The global ridesharing market is projected to reach \$212B by 2029, with an estimated 2.3B users~\cite{statista_ride_hailing}, representing a 167\% increase in market size and a 53\% increase in user base compared to 2020. Although these services were initially perceived as environmentally friendly—promising reductions in emissions and congestion—recent studies have shown otherwise. A key contributor to the increased emissions is \emph{deadheading}: miles driven by drivers without passengers. These deadhead miles can significantly increase overall energy consumption, and in fact, a rideshared trip has been shown to generate approximately 47\% more CO2 emissions than a comparable private car trip~\cite{kontou2020reducing,sahebdel2024holistic,reuters,sahebdel2025lead}. Given their scale and growing influence on modern transportation systems, understanding and mitigating the environmental impacts of ridesharing has become increasingly critical.

During peak demand hours, ridesharing platforms face challenges like driver selectiveness and trip cancellations, leading to assignment inefficiencies~\cite{nanda2020balancing}. Even centralized assignment algorithms struggle to optimize both wait times and emissions when driver availability is low. Simple assignment strategies, such as first-come, first-served, can increase deadhead miles and worsen user experience. Addressing these inefficiencies requires intelligent online algorithms that balance service quality and sustainability. Moreover, accurately predicting the exact timing and magnitude of peak demand introduces significant uncertainty, which can further reduce the effectiveness of even advanced assignment strategies~\cite{goodwin2016pattern,khojaste2025electricity,khojaste2024quantile}.

While recent studies have explored methods to reduce deadhead miles and their associated emissions~\cite{kontou2020reducing}, and others have aimed to improve service quality by minimizing wait times~\cite{sahebdel2023data, sahebdel2024holistic,sahebdel2025lead}, many of these approaches lack performance guarantees across varying conditions. This issue becomes especially critical during high-demand periods when driver availability is low. In such cases, online algorithms without performance guarantees can fall significantly short compared to the optimal offline algorithms, which benefit from knowing future inputs. This underscores the need for new strategies that ensure environmentally sustainable ridesharing systems with reliable emissions reductions, even during peak demand, while also maintaining high service quality with respect to wait times.

\begin{figure}[t!]
    \centering
    \includegraphics[width=0.65\linewidth]{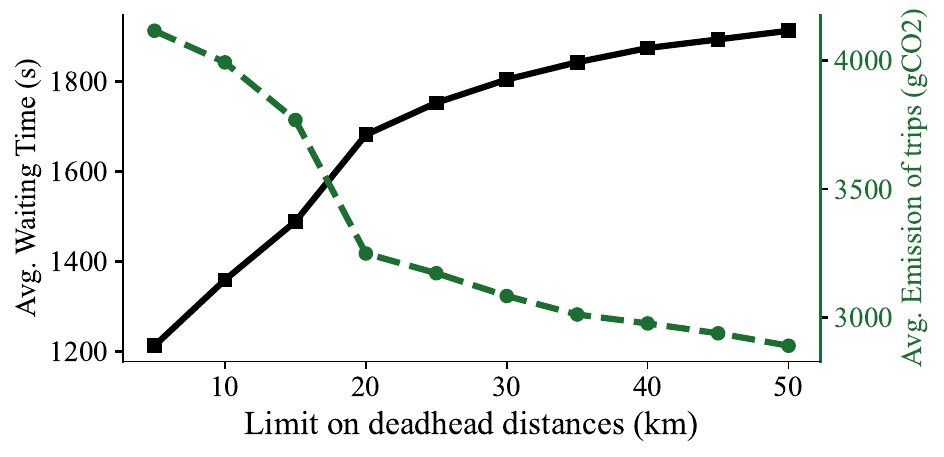}
    \caption{Adjusting the maximum allowable deadhead distance could balance the trade-off between average emission and rider waiting times within the ridesharing system.}
    \label{fig:motivation_figure}
\end{figure}

Developing such strategies with theoretical guarantees in ridesharing optimization is inherently complex, as it requires managing dynamic decision-making processes involving both current and future drivers and riders. More specifically, a major challenge lies in the uncertainty of future ride requests' timing and locations. Assigning a driver to a rider not only affects the current ride but also determines when and where that driver will be available again, influencing the deadhead distance and emissions of subsequent rides. This causes long-term changes in the online algorithm's actions and overall performance. Furthermore, designing an online algorithm that balances the platform's sustainability goals—such as emission reduction—with rider preferences, like minimizing wait times, adds another layer of complexity. Real-world factors, such as fluctuating urban traffic conditions, further complicate efforts to ensure consistent, optimal performance in practice.

\noindent\textbf{Contributions.}
Due to the above complexities, formulating a tractable multi-objective ride assignment problem that lends itself to rigorous algorithmic design and theoretical analysis poses significant challenges. As an alternative, inspired by the empirical findings presented in Section~\ref{sec:DCP_formulation} (see Figure~\ref{fig:motivation_figure}), we propose an alternative approach to control the trade-off between emissions and waiting times, framing it as a Deadhead Control Problem (\probName). We model the objective of \probName as a weighted sum of two factors: the expected carbon reduction and the rate of ride assignment which directly impacts the rider waiting time. As a ride assignment problem, this approach implicitly balances the trade-off between emissions reduction and rider wait times and, more importantly, is tractable for rigorous algorithm design and performance analysis.

In Section~\ref{sec:LARA}, we solve \probName by proposing
 a \underline{L}yapunov-based \underline{A}lgorithm for \underline{R}ide \underline{A}ssignment (\algName), an online algorithm designed to reduce carbon emissions in ridesharing platforms while minimizing rider wait times.  \algName's  benefits over previous state-of-the-art algorithms are especially pronounced during high-demand hours, when a naive assignment approach could lead to long queues of unassigned requests leading to long waiting times.  \algName dynamically adjusts the upper bound on deadhead distances for assigned drivers based on real-time conditions. The algorithm's decisions are influenced by the number of ride requests in the assignment queue (i.e., unassigned requests waiting to be dispatched), and it uses a tunable hyperparameter that allows for performance optimization under such dynamic conditions. 

In Section~\ref{sec:thm_analysis}, we analyze the theoretical performance of \algName. 
We show that the objective value obtained by \algName for \probName is within a bounded distance from the optimal solution when extra (Theorem~\ref{thm:perf_analysis}) and no additional (Theorem~\ref{thm:perf_analysis_noisy}) information about environment is provided in advance. We also show that this bound approaches zero when \algName is not constrained by the assignment queue length, which essentially shows the optimality of the algorithms under high-demand and rush hour periods.
    
Finally, in Section~\ref{sec:experiments}, we empirically evaluate the performance of \algName using both synthetic and real-world datasets. Comparing its performance against the existing emission-aware ride assignment algorithm \TORA~\cite{sahebdel2024holistic} as well as additional heuristic methods. Our results indicate significant reductions in both emissions and waiting times during high-demand periods. For instance, compared to state-of-the-art emission and waiting time aware baselines, \algName achieves up to $34\%$ reduction in average emissions on the synthetic dataset (Figure~\ref{fig:synthetic_batchDur}) and up to $13.9\%$ on the real-world dataset (Figure~\ref{fig:dataset_batchDur}), consistently outperforming competing algorithms across various scenarios. Furthermore, our analysis of \algName reveals a trade-off between reducing emissions and ensuring fair ride assignments among different drivers. This trade-off in the algorithm design could be leveraged by the ridesharing platform to balance between sustainability and customer satisfaction goals in runtime.

\section{Problem Formulation}
\label{sec:DCP_formulation}


The ridesharing platform consists of $M$ drivers and $N$ ride requests, where $N$ is unknown in advance and ride requests arrive sequentially over time. Each ride request, indexed by $n$, includes a request time, a pickup location, and a drop-off location. Each driver, indexed by $m$, operates a vehicle that emits $e_m$ grams of CO2 per unit distance and can serve only one ride at a time. Ride requests arrive sequentially, and upon each new request, the online assignment algorithm updates the drivers' locations and statuses before assigning a ride. Once assigned, a driver picks up the rider and remains unavailable until the drop-off is complete.  

The objective is to minimize two key factors: (1) total carbon emissions and (2) average rider waiting time. Carbon emissions for a trip depend on both the deadhead distance and the trip distance, as well as the vehicle’s emission efficiency. Rider waiting time consists of the time period the rider waits until the assignment happens and the time until the driver arrives at the pickup location. While shorter waiting times benefit riders, minimizing both waiting time and emissions requires trade-offs, making it challenging to optimize both simultaneously.

Previous studies \cite{sahebdel2024holistic,sahebdel2023data, sahebdel2025lead} highlight the trade-off between minimizing deadhead emissions and reducing passenger wait times in ridesharing platforms. Assigning passengers to drivers with low-emission vehicles can reduce emissions but may increase wait times due to longer pickup distances. Conversely, imposing a strict limit on the maximum allowable distance between passengers and assigned drivers can shorten wait times but reduces the chances of finding nearby low-emission vehicles, potentially increasing overall  emissions.  To better capture the trade-off between emission and waiting times in the ridesharing platform, we performed a simple analysis over the ride requests posted during first week of December 2016 from RideAustin dataset~\cite{rideaustin-dataset} (see Section~\ref{sec:experiments} for details). We imposed a cap on the deadhead distance of eligible drivers and assigned each rider to the driver with the lowest emissions. Then, we analyzed the impact of this limit on deadhead distances on the performance of ridesharing system. The results in Figure~\ref{fig:motivation_figure} indicate that instead of directly modifying ride assignment strategies based on platform status, the online algorithm can adjust the maximum allowable deadhead distance for assigned rides, balancing the trade-off between average emissions and rider waiting times. This insight motivates an alternative approach to modeling ridesharing optimization with a more tractable theoretical analysis. In the following, we introduce a novel problem formulation that captures the objectives and optimization challenges of ridesharing systems.

\subsection{\probName: Deadhead Control Problem}
\label{sec:DCP}
Motivated by the insights from our preliminary empirical results in Figure~\ref{fig:motivation_figure}, and to address the challenges of online algorithm design for emission-aware ride assignment, we introduce the \emph{Online Deadhead Control Problem}, referred to as \probName. This problem offers a new perspective on managing emissions and deadhead distances in ridesharing platforms. In \probName, ride assignments are made in batches, and the algorithm determines the maximum allowable deadhead distance for each upcoming batch. The objective is to jointly minimize total emissions and passenger wait times over the entire time horizon. Each ride request is then matched to a driver within this limit, with a focus on reducing emissions. The resulting online ride assignment process is detailed in Algorithm~\ref{alg:rideassignment_pseu}. In line~2, an online deadhead control algorithm determines the deadhead limit to be used during the next batch assignment. This limit, in turn, defines the set of eligible drivers for each ride request in line~4.


\begin{algorithm}[!h]
\SetAlgoLined
$\mathcal{N}_b \leftarrow$ Set of ride requests in the assignment queue during assignment batch $b$\;
$d \leftarrow $ Upper-bound on the deadhead distance of trips in batch $b$ returned by online deadhead control algorithm\;
\For{each ride request $n \in \mathcal{N}_b$}
{
 $\mathcal{M}_{n,b} = \{m\ | \ m$ is available, deahdead distance between $m$ and  $n$ is at most $d \}$\;
 $m \leftarrow $ a driver in $\mathcal{M}_{n,b} $ with lowest trip emission\;
 Assign $m$ to $n$\;
}
 \caption{Online Ride Assignment in Batch $b$}
 \label{alg:rideassignment_pseu}
\end{algorithm}
The objective function in \probName comprises two terms: $E_{N}$, the time-averaged expected emission reduction, and $R_{N}$, the average rate of ride assignments. The first term addresses emissions, while the second ensures lower passenger wait times. To rigorously model two objective terms $E_{N}$ and $R_{N}$, we begin by introducing some additional notations as follows. Let $\textbf{B}$ denote the total number of batches required to process all ride requests, and let $n_b$ represent the number of requests assigned in batch $b$.

To model the time-averaged expected emissions reduction $E_N$, we define \( g_M(d) \) as the expected emissions saved by limiting the deadhead distance to \( d \) for a platform with \( M \) drivers, compared to always assigning the nearest driver. While emissions savings per assignment generally depend on time and location, under the assumption that ride requests are independent and drawn from a time-independent distribution over features (e.g., request time, pickup, and drop-off), the drivers' spatial distribution and the ratio of drivers to requests stabilize. This allows \( g_M(d) \) to be treated as a time- and location-independent function. We assume the existence of such a function, although its exact form is unknown to the algorithm, which must estimate it from prior (possibly noisy) observations.

To model the time-averaged expected emissions reduction $E_N$, we define \( g_M(d) \) as the expected emissions saved by limiting the deadhead distance to \( d \) for a platform with \( M \) drivers, compared to always assigning the nearest driver. While the emission reduction for each ride assignment generally depends on both time and location, under certain assumptions, these dependencies can be relaxed for \( g_M(d) \). Specifically, if ride requests are independent, with features such as request time, pickup, and drop-off locations drawn from a time-independent distribution, the resulting distribution of driver locations also becomes time-independent, and the ratio between the number of available drivers and ride requests in a particular region becomes time- and location-independent. This implies that \( g_M(d) \) can be treated as a time- and location-independent function. In this work, we assume the existence of such a function \( g_M(d) \), though the online algorithm \emph{does not} have prior knowledge of its exact form. Instead, it must approximate \( g_M(d) \) using estimation methods based on prior, potentially noisy, observations. Based on this, we introduce the first term of the objective of \probName, $E_{N}$, as 
\begin{equation}
    E_{N} = \frac{\mathbb{E}\bigg[\sum_{b = 1}^{\textbf{B}} \sum_{d \in \mathcal{D}} a_{b,d} \cdot n_b \cdot g_M(d)\bigg]}{\mathbb{E}[T_N]},
\end{equation}
where $T_N$ denotes the time from the first assignment to the drop-off of the last ride, $\mathcal{D} = \{d_{\min}, \dots, d_{\max}\}$ represents the set of potential deadhead distance limits available to the online algorithm, and $a_{b,d}$ is a binary decision variable indicating whether the deadhead distance in batch $b$ was limited to $d$. For simplicity, we normalize the function $g_M(d)$ to range between 0 and 1, with $g_M(d_{\min}) = 0$ and $g_M(d_{\max}) = 1$. The algorithm selects $d$ at the start of each batch, determining the maximum allowable deadhead distance for the upcoming batch.  
Last, it is worth noting that $E_N$ represents the emission reduction achieved by the online algorithm compared to a ridesharing platform that assigns requests to the nearest driver. Maximizing $E_N$ over the horizon $T_N$ is equivalent to minimizing the expected emissions of the ridesharing platform within that period. 

The second term in the objective function, $R_{N}$,  represents the global ride assignment rate. It seeks to minimize the total service time $T_N$  required to assign all ride requests. By reducing the total service time, the algorithm indirectly minimizes rider waiting times. The calculation of $R_{N}$ is given below.
\begin{equation}
    R_{N} = \frac{\mathbb{E}\bigg[\sum_{b = 1}^{\textbf{B}} \sum_{d \in \mathcal{D}} a_{b,d} \cdot n_b \bigg]}{\mathbb{E}[T_N]}.
\end{equation}
Combining these terms with the coefficient $\alpha$, the \probName is formulated as a maximization problem as follows.
\begin{subequations}
\label{eq:RideAssign}
    \begin{eqnarray}
    \label{eq:objective}
      [\probName] & \max &  \alpha \cdot E_{N} + (1-\alpha) \cdot R_{N} \\
      \label{eq:constraint_a}
      &\textrm{s.t.,} & \sum_{d\in \mathcal{D}}  a_{b,d} \leq 1  \qquad  \forall b, \\
       \label{eq:constraint_c}
      &\textrm{vars.,} & \quad \quad a_{b,d} \in \{0, 1\}  \qquad  \forall b, d,
    \end{eqnarray}
\end{subequations}
where parameter $0 \leq \alpha \leq 1$ represents the weight of the term $E_N$ 
in the objective value, determining the relative importance of maximizing the term related to reducing emissions ($E_N$) versus the term related to maximizing ride assignment rate ($R_N$). Using \probName to manage emissions and wait times simplifies the analysis of online algorithms while still enabling the minimization of both expected emissions and rider waiting times within the ridesharing system. 


During peak demand periods, long queues of unassigned ride requests can lead to significantly increased rider wait times~\cite{wang2018understanding,li2023understanding,lu2021efficiency}. To maintain responsiveness in such situations, the algorithm should prioritize reducing wait times by tightening deadhead distance limits and increasing assignment frequency, thereby placing greater emphasis on \(R_N\). Conversely, when the assignment queue is short, the algorithm can shift its focus toward environmental impact by matching riders with low-emission drivers, even if they are located farther away---placing greater emphasis on \(E_N\). This strategy helps lower overall emissions by utilizing more efficient vehicles when feasible. However, making such real-time adjustments remains challenging for online algorithms due to uncertainty in future ride requests, demand fluctuations, and variability in trip distances.



\section{\algName: An Online Algorithm for \probName}
\label{sec:LARA}

\probName can be modeled within the framework of Lyapunov optimization for renewal systems~\cite{neely2012dynamic}, where a decision maker selects actions sequentially, and the duration of each renewal frame depends on the chosen actions. In the context of the online deadhead control problem, each frame corresponds to the expected duration of trips in the upcoming batch, spanning from assignment to drop-off. We adopt the Lyapunov optimization approach—commonly used for designing online algorithms in renewal systems—to develop a robust and adaptive online algorithm for \probName.
\label{sec:online_algorithm}
The design and analysis of \algName rely on the following two assumptions:  
\begin{assumption}[Independence]
\label{asm:independent_req}
Ride requests are independent. The time and location of each new ride request are independent of previously posted requests.
\end{assumption}
\begin{assumption}[Continuity]
\label{asm:cont_time}
The rate of new ride requests is sufficiently high, such that at least one request is available for assignment in each batch.  
\end{assumption}  
We note that both assumptions are required for the theoretical analysis of \probName; however, the proposed algorithms remain executable even when the assumptions do not strictly hold. Moreover, these assumptions are often reasonable in practice. Assumption~\ref{asm:independent_req} is valid in real-world scenarios, as most riders request rides independently. Assumption~\ref{asm:cont_time} ensures continuity in the assignment process across all $\textbf{B}$ batches, maintaining a high-demand regime. While there may be low-demand periods—such as late-night hours—during which no requests occur, the system can still be analyzed over subintervals where the assumption holds.

\subsection{Design of \algName}
In the following, we introduce a \underline{L}yapunov-based \underline{A}lgorithm for \underline{R}ide \underline{A}ssignment (\algName), which provides a near-optimal solution to \probName. \algName dynamically adjusts the deadhead distance limit based on the number of unassigned ride requests without requiring any prior knowledge of future ride arrivals. 

To control rider waiting times, \algName employs a parameter, $Q_{\max}$, which sets the maximum allowable number of unassigned requests in the assignment queue. The aim is to keep the number of requests in the assignment queue below this threshold. Importantly, $Q_{\max}$ sets a worst-case upper bound on rider wait times, independent of the platform's expected average waiting time. Let $N_b$ represent the number of unassigned requests during assignment batch $b$. \algName computes $Q(b) = Q_{\max} - N_b$, which it then uses to set the deadhead distance limit for rides assigned in batch $b$.


When $Q(b)$ is small (indicating a large number of unassigned requests), the algorithm selects shorter deadhead distance limits to expedite the assignment process and reduce riders' waiting times. Conversely, when $Q(b)$ is large (indicating fewer unassigned requests), \algName increases the deadhead distance limit to optimize the ride assignments, aiming to reduce emissions. In each batch, \algName solves a single-batch maximization problem to determine the deadhead distance limit, which leads to a near-optimal solution for the entire \probName over $\textbf{B}$ batches.

As discussed in Section~\ref{sec:DCP}, online algorithms do not have perfect information about the function \(g_M\). \algName utilizes its estimation, \(\hat{g}_M\), to make deadhead control decisions. In each assignment batch \(b\), \algName first computes the number of requests in the assignment queue and evaluates \(Q(b)\) accordingly. It then determines the deadhead distance limit by solving the following optimization problem:
\begin{subequations}
\label{eq:OnlineAlg}
    \begin{eqnarray}
    \label{eq:alg_obj}
      \max_{d\in \mathcal{D}} &  \sum_{d} a_{b,d} \cdot \frac{Q_{\max}(1-\alpha + \alpha \cdot \hat{g}_M(d)) - Q(b)  }{d + d_{t,b}}, \\
      \label{eq:alg_sing_assign}
       \textrm{s.t.,} & \sum_{d\in \mathcal{D}}  a_{b,d} \leq 1  \quad \forall b, \\
       \label{eq:alg_vars}
      \textrm{vars.,} & \qquad  a_{b,d} \in \{0, 1\}.
    \end{eqnarray}
\end{subequations}
where $\{a_{b,d} \mid \forall b, d\}$ denotes the decision vector, and $d_{t,b}$ is the average trip distance of ride requests in batch \( b \). The constraint~\eqref{eq:alg_sing_assign} ensures that exactly one deadhead distance limit is selected for each batch. Intuitively, Equation~\eqref{eq:alg_obj} corresponds to the objective~\eqref{eq:objective} for a single batch and a single ride assignment (i.e., \( n_b = 1 \)). In this setting, the time horizon length is proportional to the sum of average deadhead and trip distances. Moreover, the numerator includes a shift of \( Q(b)/Q_{\max} \), which scales with the number of unassigned requests. This term enables \algName to adaptively adjust the deadhead limit based on \( Q(b) \), balancing emissions reduction with maximizing the assignment rate.

\section{Theoretical Analysis of \algName}
\label{sec:thm_analysis}
In this section, we provide a theoretical analysis of \algName's performance. We first derive a closed-form expression for the dynamics of \( Q(b) \) across different batches. Then, in Theorem~\ref{thm:perf_analysis}, we show that \algName achieves an objective value within a constant gap of the optimal solution for \probName, assuming perfect knowledge of \( g_M \) is given in advance. Finally, in Theorem~\ref{thm:perf_analysis_noisy}, we extend this result to analyze \algName's performance under no prior information about \( g_M \). 



\algName uses the value of $Q(b)$ to make decisions for batch $b$.
Dynamics of $Q(b)$ are directly influenced by the number of requests assigned during batch $b$ and the number of new requests posted before the next assignment batch. Let $r_b$ denote the number of ride requests posted between assignment batches $b$ and $b+1$. The update rule for $Q(b)$ is given by:
\begin{equation}
   \label{eq:q_update_rule}
    Q(b+1) = Q(b) - r_b + \sum_{d\in\mathcal{D}} a_{b,d}\cdot n_b,
\end{equation}
where $n_b$ represents the number of ride requests assigned during batch $b$ and is bounded by $n_b \leq \min (M_b, N_b)$ where $M_b$ denotes the number of available drivers during batch assignment $b$. The last term reflects whether the online algorithm has selected any deadhead distance limit for batch $b$ considering constraint~\eqref{eq:alg_sing_assign}, and if so, how many ride assignments occurred during that batch.  We leverage the given dynamic of \(Q(b)\) to establish a bound on the performance gap between \algName and the optimal offline algorithm, as formalized in the following theorem.



\begin{theorem}
\label{thm:perf_analysis}
Let \( OBJ \) and \( OBJ_o \) denote the objective values of \algName and the optimal offline algorithm for \probName, respectively. Under perfect estimation of the function \( g_M(d) \), i.e., \( \hat{g}_M(d) = g_M(d) \), and considering an asymptotically long time horizon, i.e., \( \textbf{B} \to \infty \), the following bound holds:
\begin{equation}
    OBJ_o - \frac{\E[r_b^2]}{Q_{\max} \cdot \E[t_n]} \leq OBJ,
\end{equation}
where \( \E[t_n] \) denotes the unconditional expected duration of a trip, from assignment to drop-off, under \algName's deadhead control.

\begin{proof}
We leverage the fact that the set of possible actions at any time remains unchanged, and apply Lemma~1 from~\cite{neely2012dynamic}, which establishes that for online renewal systems, there exists a stationary algorithm—one that selects a fixed action in every interval—that achieves the optimal objective value. Since \probName falls within the class of such systems, this result supports our analysis of \algName's performance. Specifically, the stationary algorithm in the context of \probName corresponds to choosing a fixed deadhead distance limit across all assignment batches. 


While identifying the optimal stationary algorithm is infeasible in practice due to uncertainty in future ride requests, its existence, and its equivalence in performance to the optimal offline algorithm, makes it a valuable benchmark. Therefore, we use the best stationary policy as a reference point to evaluate the performance of the proposed online deadhead control algorithm, \algName.


We begin by defining a Lyapunov function $L(b)$ and the conditional Lyapunov drift function, $D(b)$, as follows: 
\begin{align*}
    L(b) =& \frac{1}{2} Q(b)^2,\\
    D(b) =& \E\left[L(b+1) - L(b) |Q(b), M_b\right].
\end{align*}
Leveraging the update rule for $Q(b)$ from Equation~\eqref{eq:q_update_rule}, we get:
\begin{align*}
        D(b) =& \frac{1}{2} \E[(r_b - \sum_{d\in \mathcal{D}} a_{b,d}\cdot n_b)^2| Q(b), M_b]\\
    -& Q(b) \E[(r_b - \sum_{d\in \mathcal{D}} a_{b,d}\cdot n_b) | Q(b), M_b].
\end{align*}
By subtracting a value of $Q_{\max}\cdot n_b\sum_{d\in \mathcal{D}} a_{b,d} \cdot (1-\alpha + \alpha \cdot g_M(d))$ from both sides, we obtain:
\begin{align}
    D(b)  -& Q_{\max}\cdot n_b\sum_{d\in \mathcal{D}} a_{b,d} \cdot (1-\alpha + \alpha \cdot g_M(d))\notag\\
    \leq & \frac{1}{2} \E[(r_b - \sum_{d\in \mathcal{D}} a_{b,d}\cdot n_b)^2| Q(b), M_b]\notag\\
    -& Q(b) \E[(r_b -  \psi\sum_{d\in \mathcal{D}} a^{*}_{b,d}\cdot n_b) | Q(b), M_b] \notag\\
    -& \psi Q_{\max}\cdot n_b\sum_{d\in \mathcal{D}} a^{*}_{b,d} \cdot (1-\alpha + \alpha \cdot g_M(d))\notag\\
    \leq & \frac{1}{2} \E[(r_b - \sum_{d\in \mathcal{D}} a_{b,d}\cdot n_b)^2| Q(b), M_b]\notag\\
    +& Q(b) \E[\sum_{d} a_{b,d} (d + d_{t,b})] \E[(\sum_{d\in \mathcal{D}} a^{*}_{b,d} \frac{ n_b}{(d + d_{t,b})} \notag\\
    -& \frac{r_b}{\sum_{d} a_{b,d} (d + d_{t,b})} ) | Q(b), M_b]\notag\\
    \label{eq:proof_perfect_br1}
    -& \psi Q_{\max}\cdot n_b\sum_{d\in \mathcal{D}} a^{*}_{b,d} \cdot (1-\alpha + \alpha \cdot g_M(d)),
\end{align}
where $a^{*}_{b,d}$ is the action of optimal stationary algorithm for \probName, and $\psi = \frac{\E[\sum_{d} a_{b,d} (d + d_{t,b})]}{\E[\sum_{d} a^{*}_{b,d} (d + d_{t,b})]} $. The above inequality holds since \algName's action results from solving the maximization problem~\eqref{eq:alg_obj}, i.e.
\begin{align*}
    &\sum_{d} a^{*}_{b,d} \frac{\left(Q_{\max}\cdot(1-\alpha + \alpha \cdot g_M(d)) - Q(b) \right) }{d + d_{t,b}} \\
    \leq& \sum_{d} a_{b,d}  \frac{ \left(Q_{\max}\cdot(1-\alpha + \alpha \cdot g_M(d)) - Q(b) \right) }{d + d_{t,b}}.
\end{align*}

The second term is always negative, as the expected rate of assignment under the optimal stationary algorithm cannot exceed the expected posting rate of new ride requests. By taking the conditional expectation of both sides, summing over all batches, and dividing by $Q_{\max}\cdot \textbf{B} \cdot \E[d+d_{t,b}]$, we get:
\begin{align*}
    & \frac{1}{Q_{\max}\cdot \textbf{B}\cdot   \E[d+d_{t,b}]}  \sum_{b=1}^{\textbf{B}} D(b) - \sum_{b=1}^{\textbf{B}} n_b\sum_{d\in \mathcal{D}} a_{b,d} \frac{ 1-\alpha + \alpha \cdot g_M(d)}{\textbf{B} \cdot   \E[d+d_{t,b}]}\\
    \leq &\frac{1}{2Q_{\max}\cdot \textbf{B} \cdot   \E[d+d_{t,b}]} \sum_{b=1}^{\textbf{B}} \E[(r_b - \sum_{d\in \mathcal{D}} a_{b,d}\cdot n_b)^2| Q(b), M_b]\\
    -&   \sum_{b=1}^{\textbf{B}} n_b \sum_{d\in \mathcal{D}} a^{*}_{b,d} \frac{ 1-\alpha + \alpha \cdot g_M(d)}{\textbf{B} \cdot   \E[d+d_{t,b}]}\\
    \leq &\frac{\E[r_b^2]}{Q_{\max}\cdot \E[d+d_{t,b}]} -\sum_{b=1}^{\textbf{B}} n_b \sum_{d\in \mathcal{D}} a^{*}_{b,d} \frac{ 1-\alpha + \alpha \cdot g_M(d)}{\textbf{B} \cdot   \E[d+d_{t,b}]}.
\end{align*}
Under Assumption~\ref{asm:cont_time} and taking the limit as $\textbf{B} \rightarrow \infty$ completes the proof.
\end{proof}
\end{theorem}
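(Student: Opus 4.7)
The plan is to follow the classical Lyapunov drift-plus-penalty framework for renewal systems~\cite{neely2012dynamic}, adapted to the batch structure of \probName. I would begin by invoking Lemma~1 of~\cite{neely2012dynamic} to reduce the comparison target from the optimal offline algorithm to a stationary randomized policy $\{a^*_{b,d}\}$ whose expected per-batch action is time-invariant and whose long-run objective equals $OBJ_o$. This reduction is the key structural move, since a stationary policy gives batch-independent statistics and makes the drift-plus-penalty bookkeeping tractable.

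Next, I would introduce the quadratic Lyapunov function $L(b) = \tfrac{1}{2}Q(b)^2$ and its conditional drift $D(b)=\E[L(b+1)-L(b)\mid Q(b),M_b]$. Expanding $Q(b+1)^2$ via the update rule~\eqref{eq:q_update_rule} cleanly decomposes $D(b)$ into a bounded second-moment term in $r_b - \sum_d a_{b,d}n_b$ and a cross term proportional to $Q(b)$. To tie this decomposition to the objective, I would subtract from both sides the per-batch reward $Q_{\max}\cdot n_b \sum_d a_{b,d}(1-\alpha+\alpha g_M(d))$, producing the drift-minus-reward inequality that is standard in Lyapunov optimization but rescaled to account for variable-length renewal frames.

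The critical step is the exchange inequality: because \algName solves~\eqref{eq:OnlineAlg} at every batch, the action it selects maximizes the ratio $(Q_{\max}(1-\alpha+\alpha \hat g_M(d)) - Q(b))/(d+d_{t,b})$ over $\mathcal{D}$, and under the perfect-estimation hypothesis $\hat g_M = g_M$. Multiplying through by the expected renewal length $\sum_d a_{b,d}(d+d_{t,b})$ and introducing the rescaling factor $\psi = \E[\sum_d a_{b,d}(d+d_{t,b})]/\E[\sum_d a^*_{b,d}(d+d_{t,b})]$ lets me replace \algName's action with the stationary optimal action on the right-hand side, yielding~\eqref{eq:proof_perfect_br1}. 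I expect this renewal-frame normalization to be the main obstacle, because it mixes quantities under two different policies and it is easy to lose a constant factor if one is careless about whether expectations are conditional on $Q(b)$ or unconditional.

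Finally, I would observe that the cross term involving $Q(b)$ is non-positive, since under any feasible stationary policy the expected assignment rate cannot exceed the expected arrival rate of new ride requests (otherwise the queue would be unstable). Taking unconditional expectations, summing $b=1,\dots,\textbf{B}$, telescoping the drift (whose total is $\E[L(\textbf{B}+1)] - \E[L(1)]$), dividing by $Q_{\max}\cdot \textbf{B}\cdot \E[d+d_{t,b}]$, and invoking Assumption~\ref{asm:cont_time} with $\textbf{B}\to\infty$, the telescoped drift vanishes, the second-moment term collapses to the penalty $\E[r_b^2]/(Q_{\max}\E[t_n])$ (identifying $\E[d+d_{t,b}]$ with $\E[t_n]$), and the remaining time-averages match $OBJ$ on the left and $OBJ_o$ on the right via the definitions of $E_N$ and $R_N$, giving the claimed bound.
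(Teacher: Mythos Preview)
Your proposal is correct and follows essentially the same approach as the paper's own proof: invoke Lemma~1 of~\cite{neely2012dynamic} to reduce to a stationary optimal policy, apply the quadratic Lyapunov drift-minus-reward argument, use \algName's per-batch optimality in~\eqref{eq:OnlineAlg} together with the rescaling factor $\psi$ to swap in the stationary action, drop the non-positive cross term, then telescope, average, and take $\textbf{B}\to\infty$. Your write-up is in fact slightly more explicit than the paper's about the telescoping of $\E[L(\textbf{B}+1)]-\E[L(1)]$ and about the identification $\E[d+d_{t,b}]=\E[t_n]$, but the logical skeleton is identical.
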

\begin{remark}
\label{rem:conflict_worstCaseWaiting}
Theorem~\ref{thm:perf_analysis} demonstrates that increasing the value of the parameter $Q_{\max}$ reduces the gap between the performance of \algName and the optimal algorithm. This highlights a trade-off between minimizing expected emissions and managing the maximum rider waiting time within the ridesharing system. In addition, \algName's performance converges that of optimal offline algorithm if high values for $Q_{\max}$ is used which shows the optimality of \algName in such cases.
\end{remark}
Although Theorem~\ref{thm:perf_analysis} provides key insights into the performance of \algName, in practice, \algName relies on predicted values of \( g_M(d) \), which are inherently imperfect. The following theorem extends the result of Theorem~\ref{thm:perf_analysis} to account for noisy predictions.

\begin{figure*}[t!]
\centering
\begin{tabular}{cc}
\multicolumn{2}{c}{\includegraphics[width=0.75\linewidth]{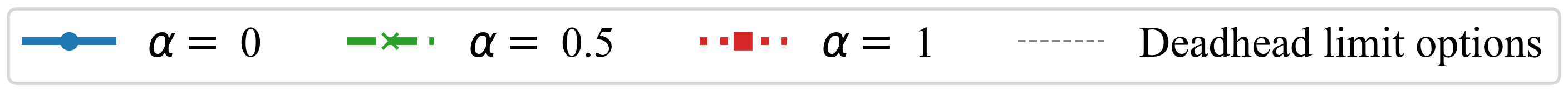}} \vspace{-1mm} \\
\includegraphics[width=0.48\linewidth]{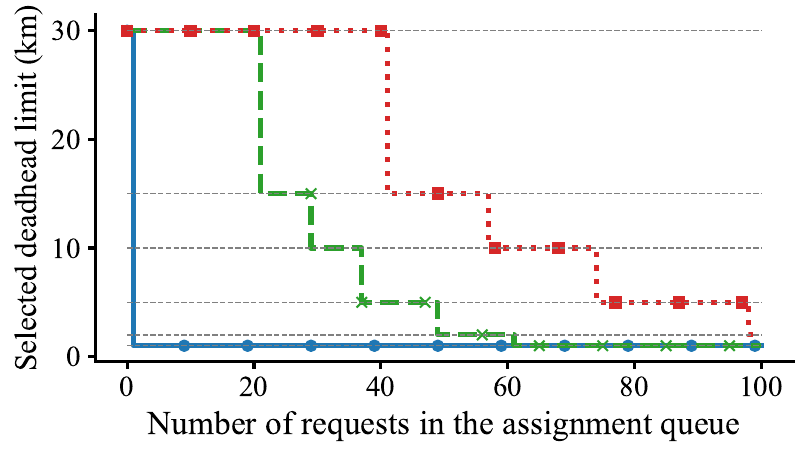} \hspace{10mm}&  \includegraphics[width=0.48\linewidth]{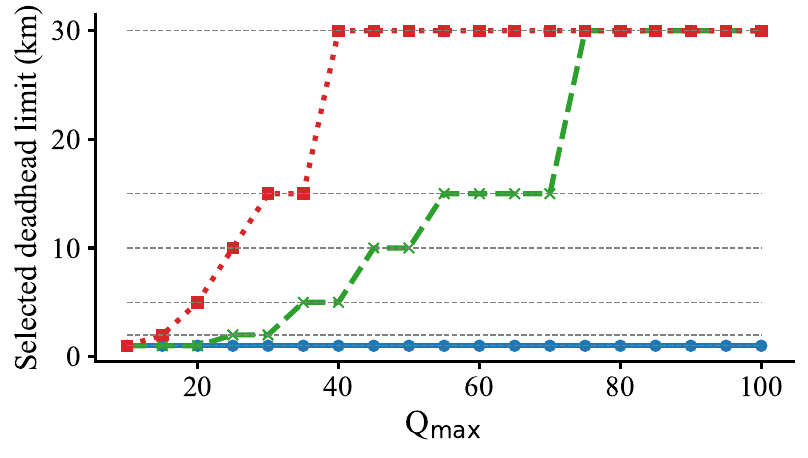} \\  
\end{tabular}
    \caption{(a) Deadhead distance limits selected by \algName as a function of the number of requests in the assignment queue with $Q_{\max}$ fixed at 100 (a), and as a function of $Q_{\max}$ with the assignment queue length fixed at 15 for different values of $\alpha$.}
    \label{fig:understanding}
\end{figure*}

\begin{theorem}
\label{thm:perf_analysis_noisy}
The result of Theorem~\ref{thm:perf_analysis} holds if \( \hat{g}_M(d) \) is derived from an unbiased estimator of \( g_M(d) \), satisfying:
\begin{equation} 
\lim_{b \to \infty} \E \left[|\hat{g}_{M,b}(d) - g_M(d)|\right] = 0, \quad \forall d,
\end{equation}
where \( \hat{g}_{M,b}(d) \) denotes the estimate of \( g_M(d) \) after batch \( b \).
\end{theorem}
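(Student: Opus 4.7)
The plan is to mirror the proof of Theorem~\ref{thm:perf_analysis} line by line, replacing the true function $g_M(d)$ with its running estimate $\hat{g}_{M,b}(d)$ wherever \algName's optimality in the per-batch problem~\eqref{eq:OnlineAlg} is invoked, and then paying a controlled ``estimation-error'' price by adding and subtracting $g_M(d)$ terms. Concretely, I would define the Lyapunov function $L(b) = \tfrac12 Q(b)^2$ and the drift $D(b)$ exactly as before, and derive the same drift-plus-penalty inequality, but with the key substitution that the per-batch comparison against the optimal stationary policy $a^{*}_{b,d}$ now comes from maximizing the objective in~\eqref{eq:alg_obj} that uses $\hat{g}_{M,b}$, not $g_M$. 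This gives me the analogue of the inequality right after~\eqref{eq:proof_perfect_br1}, but expressed in terms of $\hat{g}_{M,b}(d)$.

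The next step is to convert the $\hat{g}_{M,b}(d)$ terms back into $g_M(d)$ terms so that the right-hand side matches the offline-optimal objective $OBJ_o$. I would do this by writing $\hat{g}_{M,b}(d) = g_M(d) + \bigl(\hat{g}_{M,b}(d) - g_M(d)\bigr)$ on both the \algName side and the stationary-benchmark side of the drift bound. The $g_M$ pieces yield exactly the inequality produced in Theorem~\ref{thm:perf_analysis}, while the residual pieces collect into an additive error term bounded by
\begin{equation*}
\frac{\alpha\,Q_{\max}}{d_{\min}+d_{t,b}} \sum_{d \in \mathcal{D}} \bigl|\hat{g}_{M,b}(d) - g_M(d)\bigr|,
\end{equation*}
where I use that $a_{b,d},a^{*}_{b,d}\in\{0,1\}$ with at most one nonzero entry per batch and that $d+d_{t,b}\ge d_{\min}+d_{t,b}>0$. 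After taking conditional expectations, summing over $b=1,\dots,\mathbf{B}$, and dividing by $Q_{\max}\cdot \mathbf{B}\cdot \E[d+d_{t,b}]$ as in the original proof, this residual becomes a Cesàro average of $\E\bigl[|\hat{g}_{M,b}(d)-g_M(d)|\bigr]$ over batches.

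The main obstacle, and the only place where the new hypothesis is actually used, is showing that this Cesàro average vanishes in the limit $\mathbf{B}\to\infty$. By the assumption $\lim_{b\to\infty}\E\bigl[|\hat{g}_{M,b}(d)-g_M(d)|\bigr]=0$ for every $d\in\mathcal{D}$, the standard Cesàro mean lemma gives
\begin{equation*}
\lim_{\mathbf{B}\to\infty} \frac{1}{\mathbf{B}} \sum_{b=1}^{\mathbf{B}} \E\bigl[|\hat{g}_{M,b}(d)-g_M(d)|\bigr] \;=\; 0,
\end{equation*}
so the extra error term drops out entirely in the limit. A subtle point I will need to check carefully is that the estimator $\hat{g}_{M,b}$, being adapted to information available at the start of batch $b$, is measurable with respect to the $\sigma$-algebra on which the conditional expectations in the drift analysis are taken; since \algName plugs $\hat{g}_{M,b}$ into~\eqref{eq:OnlineAlg} using only past observations, this measurability is automatic, and the unbiasedness together with finiteness of $\mathcal{D}$ and boundedness of $g_M$ on $[0,1]$ keeps all quantities integrable.

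Putting these pieces together, taking $\mathbf{B}\to\infty$ under Assumption~\ref{asm:cont_time} exactly as at the end of the proof of Theorem~\ref{thm:perf_analysis}, the residual estimation-error term vanishes, and the remainder reproduces the bound $OBJ_o - \E[r_b^2]/(Q_{\max}\cdot \E[t_n]) \le OBJ$, establishing the theorem.
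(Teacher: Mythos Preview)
Your proposal is correct and follows the same overall strategy as the paper: re-run the drift-plus-penalty argument of Theorem~\ref{thm:perf_analysis}, invoke \algName's per-batch optimality with respect to $\hat{g}_{M,b}$ rather than $g_M$, add and subtract $g_M(d)$ to expose an additive estimation-error term, and show that this term vanishes after averaging over batches and taking $\mathbf{B}\to\infty$. The one substantive difference is in how the vanishing step is executed. The paper posits a uniform rate $\lvert g_M(d)-\hat{g}_{M,b}(d)\rvert\le C_0\,b_0^{-\gamma}$ for all $b>b_0$, splits the sum at $b_0=\sqrt{\mathbf{B}}$, and sends $\mathbf{B}\to\infty$; this implicitly assumes a polynomial convergence rate that is stronger than the bare $L^1$ convergence stated in the theorem. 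Your route via the Ces\`aro mean lemma uses exactly the hypothesis $\lim_{b\to\infty}\E\bigl[\lvert\hat{g}_{M,b}(d)-g_M(d)\rvert\bigr]=0$ and nothing more, so it is both cleaner and more faithful to the statement. Either way the extra error term is $O\!\bigl(\alpha\,Q_{\max}/d_{\min}\bigr)$ times a batch-averaged estimation error that tends to zero, and the remainder of the inequality collapses to the bound of Theorem~\ref{thm:perf_analysis}.
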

\begin{proof}
%
Let $\hat{g}_M={M,b}(d)$ denotes the estimation of $g_M(d)$ until batch assignment $b$. Since $\hat{g}_M(d)$ is an unbiased estimation of function $g_M(d)$, there exists a batch index $b_0$ and positive constants  $C_0$ and $\gamma$ such that:
\begin{equation*}
    ||g_M(d) - \hat{g}_{M,b}(d) || \leq C_0 b_0^{-\gamma} \quad \forall d, b_0 < b \leq \textbf{B}.
\end{equation*}
Let define $\epsilon := C_0 b_0^{-\gamma}$, the above inequality yields that for any batch assignment $b > b_0$:
\begin{align*}
    &\sum_{d} a_{b,d} \frac{\left(Q_{\max}\cdot(1-\alpha + \alpha \cdot \hat{g}_{M,b}(d)) - Q(b) \right) }{d + d_{t,b}} \\
    \geq& \sum_{d} a^*_{b,d}  \frac{ \left(Q_{\max}\cdot(1-\alpha + \alpha \cdot \hat{g}_{M,b}(d)) - Q(b) \right) }{d + d_{t,b}} \\
    \geq& \sum_{d} a^*_{b,d}  \frac{ \left(Q_{\max}\cdot(1-\alpha + \alpha \cdot g_{M}(d)) - Q(b) \right) }{d + d_{t,b}} - \frac{\alpha \cdot Q_{\max} \cdot \epsilon}{d_{\min}}.
\end{align*}
Now following the Equation~\eqref{eq:proof_perfect_br1}, we get:
\begin{align*}
    \Rightarrow& \frac{1}{Q_{\max}\cdot \textbf{B}\cdot   \E[d+d_{t,b}]}  \sum_{b=1}^{\textbf{B}} D(b) - \sum_{b=1}^{\textbf{B}} n_b\sum_{d\in \mathcal{D}} a_{b,d} \frac{ 1-\alpha + \alpha \cdot \hat{g}_M(d)}{\textbf{B} \cdot   \E[d+d_{t,b}]}\\
    \leq &\frac{\E[r_b^2]}{Q_{\max}\cdot \E[d+d_{t,b}]} -\sum_{b=1}^{\textbf{B}} n_b \sum_{d\in \mathcal{D}} a^{*}_{b,d} \frac{ 1-\alpha + \alpha \cdot g_M(d)}{\textbf{B} \cdot   \E[d+d_{t,b}]}\\
    +& C_1 \cdot b_0 + \sum_{b=b_0}^{\textbf{B}} \frac{\alpha \cdot Q_{\max} \cdot \epsilon}{d_{\min}},
\end{align*}
where $C_1$ is a constant. Setting $b_0 = \sqrt{\textbf{B}}$ and taking the limit as $\textbf{B} \rightarrow \infty$ completes the proof.
\end{proof}

Theorem~\ref{thm:perf_analysis_noisy} establishes that if an unbiased estimator of \( g_M(d) \) is used in \algName, the lower bound on its objective value remains consistent with the result of Theorem~\ref{thm:perf_analysis}. Since \( g_M(d) \) is a time-independent function, obtaining an unbiased estimate for it is a well-studied problem. For discretized deadhead distance values in the set \( \mathcal{D} \), estimating \( g_M(d) \) reduces to solving \( |\mathcal{D}| \) independent estimation tasks. Several well-known estimation techniques can be applied, including Monte Carlo estimation~\cite{dagum2000optimal,sadowsky1993optimality}, Holt-Winters smoothing~\cite{chatfield1978holt}, and the Kalman filter with sequential updates~\cite{welch1995introduction,jover1986parallel}. Each of these methods provides an unbiased estimate of \( g_M(d) \) for any \( d \in \mathcal{D} \). Over a sufficiently long time horizon, as the number of ride requests increases substantially, the estimates produced by these methods converge to the true values of \( g_M(d) \).


\textbf{Scalability and time complexity of \algName.} Beyond optimizing the objective, the computational complexity of an online algorithm is a crucial factor. An algorithm requiring a high number of operations per iteration faces scalability issues, making it impractical for real-world use as the problem size—i.e., the number of requests and drivers in \probName—grows. \algName solves the optimization problem~\eqref{eq:alg_obj} in each batch with a complexity of $\mathcal{O}(|\mathcal{D}|)$. Since the size of the deadhead distance options, $\mathcal{D}$, is independent of the number of requests and drivers, the complexity of \algName’s decision-making remains constant, ensuring its scalability and efficiency.

\section{Experimental Analysis}
\label{sec:experiments}
We begin this section with a high-level analysis of \algName's behavior. Specifically, we examine how the parameters $Q_{\max}$ and $\alpha$ influence its decisions. We then conduct an extensive evaluation of \algName using both synthetic data and real-world data from the RideAustin dataset~\cite{rideaustin-dataset}. This dual approach allows us to compare \algName against alternative algorithms in both controlled experimental settings and realistic ridesharing environments. Synthetic datasets enable the simulation of high-demand scenarios and facilitate analysis of key factors such as trip distances, driver availability, and batch assignment intervals. Although publicly available ridesharing datasets are limited and often lack critical information (e.g., pickup/drop-off locations or vehicle attributes), the RideAustin dataset remains the most suitable for evaluating \algName. While it does not fully reflect high-demand conditions, it provides valuable insights into the algorithm’s real-world performance. Finally, our ride-assignment process follows a batch system, where the algorithm assigns unallocated requests to available drivers in successive batches. Unassigned requests are carried over to the next cycle for reassignment.

\begin{figure*}[t!]
	\centering
	\subfigure{\includegraphics[width=0.46\textwidth]{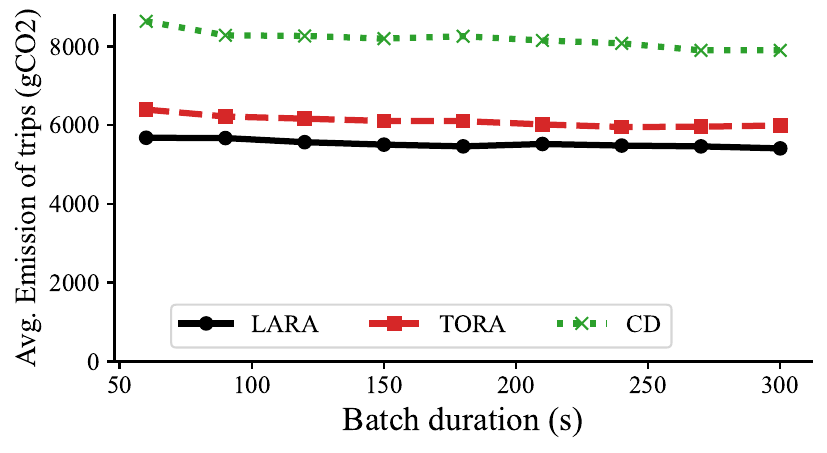}} \hspace{5mm}
	\subfigure{\includegraphics[width=0.46\textwidth]{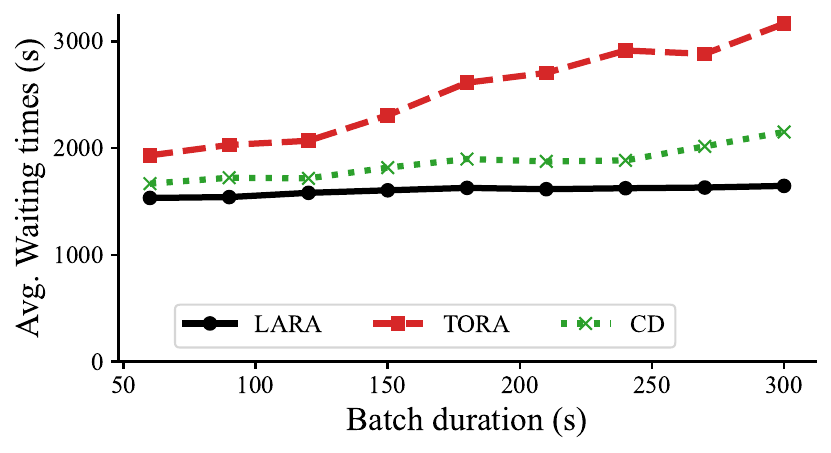}}
	\caption{Average trip emissions (left) and rider waiting times (right) as a function of batch duration for \algName and comparison algorithms. Increasing the batch duration generally leads to lower average emissions but higher rider waiting times.}
	\label{fig:synthetic_batchDur}
\end{figure*}

\subsection{Understanding of the Behavior of \algName}
First, we aim to scrutinize the deadhead control process within \algName. \algName dynamically selects the deadhead distance limit based on factors such as the number of requests in the assignment queue and hyper-parameters like $Q_{\max}$. To analyze the impact of these factors on \algName's decision-making, we conduct a simple analysis using a version of \algName with $Q_{\max} = 100$ and deadhead distance limit options of 1, 2, 5, 10, 15, and 30 km. We assume there are 15 ride requests in the assignment queue and evaluate how \algName selects deadhead limits by varying the number of requests and the value of $Q_{\max}$ in two separate analyses. In this analysis, we use a logarithmic form for the function $g_M(d)$, where $g_M(d) = C_0 \cdot \log(d)$, with $C_0 = 1/\log(d_{\max}) \approx 0.20$ as the normalization constant.

Figure~\ref{fig:understanding}(a) illustrates the influence of the number of requests in the assignment queue on the deadhead distance limit selected by \algName for different values of $\alpha$. When the queue has relatively few ride requests, \algName selects higher deadhead distance limits to prioritize reducing emissions. However, as the number of requests increases, \algName shifts focus towards reducing rider waiting times by selecting lower deadhead limits. Additionally, increasing the parameter $\alpha$ causes \algName to emphasize more on the emission term in the objective function of Equation~\eqref{eq:objective}, leading to the selection of higher deadhead distance limits. In Figure~\ref{fig:understanding}(b), we plot the deadhead limits selected by \algName as a function of $Q_{\max}$ for different values of $\alpha$. When the number of requests in the assignment queue is fixed (15 in this analysis), increasing $Q_{\max}$ widens the gap between the assignment queue length and $Q_{\max}$. This results in \algName selecting higher deadhead distance limits as the system allows for a larger buffer to focus on emission reduction.

\noindent 
\emph{\textbf{Key takeaway.}  As the number of unassigned ride requests increases, \algName lowers the deadhead distance limits to prioritize faster ride assignments. Higher values of $Q_{\max}$ leads \algName to select larger deadhead distance limits, providing greater flexibility to reduce emissions.}


\textbf{Choosing parameters $\alpha$ and $Q_{\max}$.} In practice, the parameter $\alpha$ is chosen by stakeholders and decision makers based on their relative priorities between reducing emissions and minimizing waiting times. As shown in Figure~\ref{fig:understanding}, varying $\alpha$ influences the deadhead distance limits selected by \algName. The parameter $Q_{\max}$, on the other hand, reflects the platform’s tolerance for maximum allowable waiting times. Specifically, doubling $Q_{\max}$ can result in a doubling of the worst-case waiting time under adverse conditions. Selecting an appropriate value for $Q_{\max}$ should take into account the operational and statistical characteristics of the ridesharing platform, such as demand patterns, and traffic conditions.



\subsection{Experiments on Synthetic Data} \textbf{Comparison Algorithms.} We compare the performance of \algName against two algorithms: (1) \TORA~\cite{sahebdel2024holistic}, the only existing emission-aware ride assignment algorithm which also considers deadhead emissions; and (2) a heuristic algorithm that assigns riders to the closest available driver, referred to as \cd. \TORA is considered as a baseline algorithm for emission reduction while \cd is a baseline for minimizing passengers' waiting times in a greedy manner.


\noindent \textbf{Experimental Setup.} We generated traces consisting of 50,000 ride requests, with a new request posted on the ridesharing platform every 5 seconds. Pickup and drop-off locations are selected within area of Austin, Texas, ensuring an average trip distance of 15 km with a standard deviation of 5 km. The assignment happens every 2 minutes and the platform operates with 500 drivers, resulting in a per hour driver-to-request ratio of approximately $60\%$, aligning with observed Uber/Lyft statistics~\cite{uber-lyft-data1}. Additionally, vehicle emissions range between 70–300 gCO2/km, reflecting emission levels found in the RideAustin dataset and consistent with previous studies~\cite{sahebdel2023data, sahebdel2024holistic}.

For \algName, we used deadhead distance limits options of 1, 2, 5, 10, 15, and 30 km, with $\alpha = 0.75$ and $Q_{\max} = 240$, which encourages \algName to keep the portion of waiting time spent in the assignment queue to under 20 minutes. This is achieved as 24 new ride requests are generated in each batch, and 10 batch assignments occur within a 20-minute window. Additionally, we applied the Monte Carlo estimation method to estimate the function $g_M$. Specifically, when \algName assigns a ride while limiting the deadhead distance to $d$, it calculates the emission reduction compared to the closest driver’s emission and updates the estimate of $g_M(d)$ based on Monte Carlo estimation strategy. In each evaluation, all parameters are kept constant except for one, which is varied to assess its impact. For each test, we report the average trip emissions and rider waiting times for \algName and the comparison algorithms.

\begin{figure*}[t]
	\centering
	\subfigure{\includegraphics[width=0.46\textwidth]{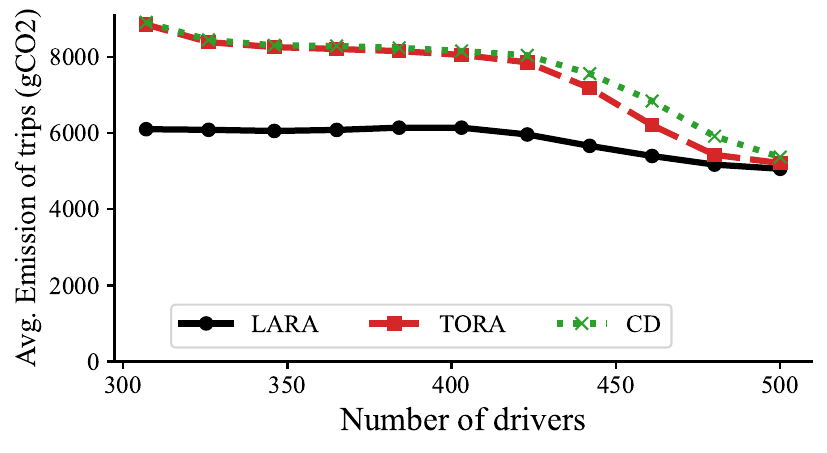}} \hspace{5mm}
	\subfigure{\includegraphics[width=0.46\textwidth]{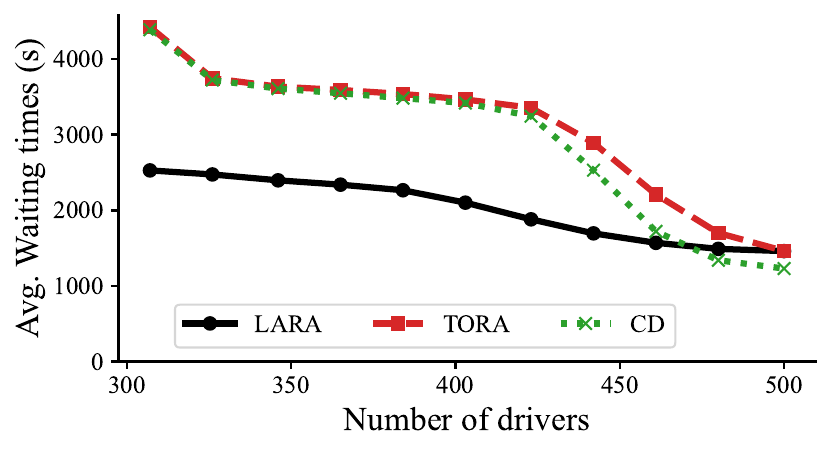}}
	\caption{Average trip emissions (left) and rider waiting times (right) as a function of the number of drivers for \algName and comparison algorithms. Fewer drivers simulate high-demand hours, leading to a wider performance gap between \algName and the other algorithms.}
	\label{fig:synthetic_driver}
\end{figure*}

\noindent \textbf{Experiment Results.} Figure~\ref{fig:synthetic_batchDur} illustrates the average trip emissions and rider waiting times as a function of batch duration for \algName and the comparison algorithms. As seen in the figure, \algName consistently achieved the lowest average emissions and waiting times across all batch durations. Specifically, compared to \cd, \algName reduced the average trip emissions by $30\%-34\%$, while this range for \TORA was $24.1\%-26.4\%$. In terms of waiting times, \algName resulted in average waiting times between $1500$ and $1600$ seconds, whereas for \cd, the range was $1600$ to $2200$ seconds, and for \TORA, between $1900$ and $3200$ seconds. During high-demand hours, the number of available drivers is low, leaving algorithms like \cd and \TORA with less flexibility to reduce emissions, as they tend to assign riders to any available driver. In contrast, \algName prioritizes finding rides with lower emissions and shorter deadhead distances, leading to lower average emissions and waiting times overall.

\noindent
\emph{\textbf{Key takeaway.}  Increasing the batch assignment duration reduces the average trip emissions but increases rider waiting times. This effect is more pronounced for \cd and \TORA than for \algName. }

\begin{figure*}[t]
	\centering
	\subfigure{\includegraphics[width=0.46\textwidth]{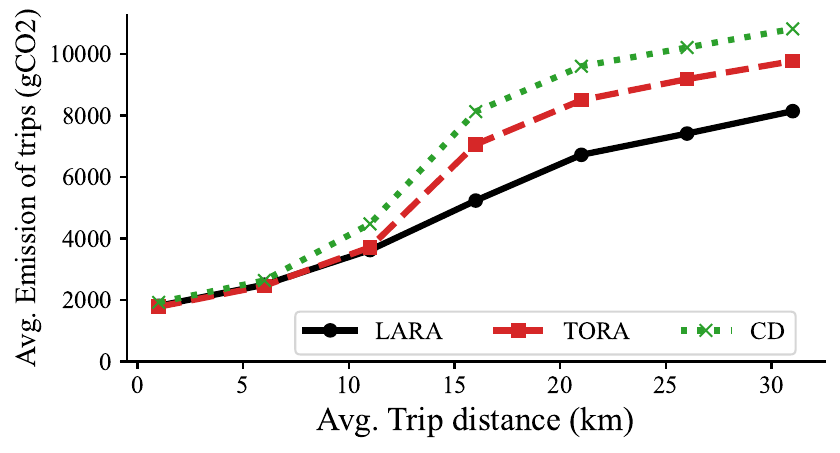}} \hspace{5mm}
	\subfigure{\includegraphics[width=0.46\textwidth]{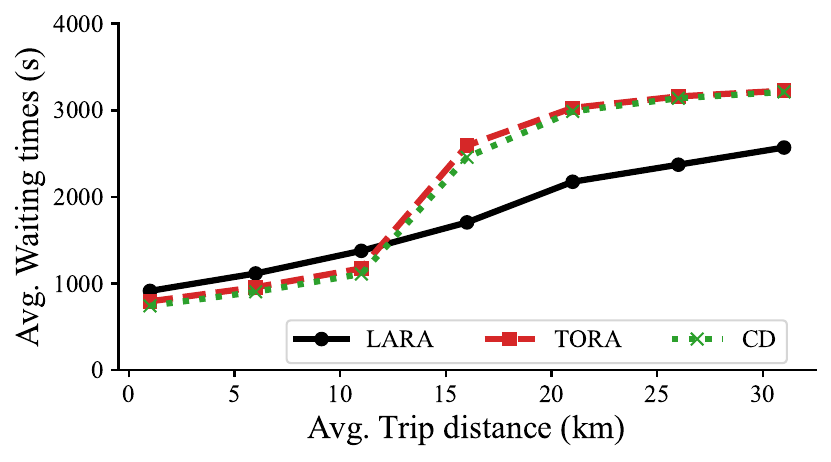}}
	\caption{Average trip emissions (left) and rider waiting times (right) as a function of trip distance for \algName and comparison algorithms. Longer trip distances increase the time each driver is allocated to a single trip, simulating high-demand hours more closely and widening the performance gap between \algName and the other algorithms.}
	\label{fig:synthetic_tripDistance}
\end{figure*}

In the second evaluation, we examine how the number of drivers in the ridesharing platform affects the performance of online ride assignment algorithms. Figure~\ref{fig:synthetic_driver} illustrates the average trip emissions and rider waiting times as a function of the number of drivers for \algName and comparison algorithms. Our results show a significant performance gap between \algName and the comparison algorithms when the number of drivers is low. In such scenarios, demand exceeds driver availability, which amplifies the performance difference between the optimal algorithm and alternatives like \TORA and \cd. For instance, with 310 drivers, \algName achieves $31\%$ lower emissions compared to \TORA and reduces waiting times by $43\%$. On the other hand, as the number of drivers increases, the performance gap between the algorithms narrows.  With larger driver pools (e.g., more than 480 drivers in our tests), the platform is no longer in a high-demand state, allowing algorithms like \cd to achieve the shortest waiting times. Nonetheless, even in these conditions, \algName continues to deliver the lowest average emissions.

\noindent
\emph{\textbf{Key takeaway.}   A low number of drivers in the synthetic dataset closely mirrors high-demand hours, during which the performance gap between \algName and the comparison algorithms is most pronounced. As the number of drivers increases, the gap narrows. In these conditions, \algName continues to achieve the lowest average emissions, while \cd delivers the shortest rider waiting times. }

In the final evaluation using the synthetic dataset, we assess how average trip distances affect the performance of ride assignment algorithms. Figure~\ref{fig:synthetic_tripDistance} illustrates the average trip emissions and rider waiting times as a function of average trip distance for various algorithms. As shown, when the average trip distance increases, the emissions and waiting times also rise across all algorithms. However, this increase is less pronounced for \algName, resulting in a wider performance gap compared to the other algorithms. The intuition behind this is that longer trips require drivers to spend more time on each ride, limiting the number of rides they can serve in a given period. Consequently, longer trip distances better simulate high-demand conditions, further highlighting the performance difference between \algName and the comparison algorithms. For example, with an average trip distance of 15 km, \algName achieves $25.8\%$ lower emissions and $33.7\%$ shorter waiting times compared to \TORA. Conversely, when trip distances are relatively short (e.g., less than 12 km in our tests), the platform is no longer in high-demand conditions. In these cases, \cd offers the shortest waiting times, while \TORA provides the lowest carbon emissions.

\noindent
\emph{\textbf{Key takeaway.} The average length of trips significantly influences high-demand conditions and the performance of online algorithms. Longer trips reduce the platform's capacity to quickly serve ride requests, widening the performance gap between heuristic algorithms like \TORA and \cd and near-optimal algorithms such as \algName.}

\begin{visionbox}{}
\noindent
\emph{\textbf{Overall takeaway.} Various factors, such as the number of drivers, ride request rates, batch assignment frequency, and trip distances, can lead to high-demand conditions. Under such conditions, \algName outperforms \cd and \TORA in reducing average emissions and rider waiting times. }
\end{visionbox}

\begin{figure*}[t]
\centering
\begin{tabular}{cc}
\multicolumn{2}{c}{\includegraphics[width=0.75\linewidth]{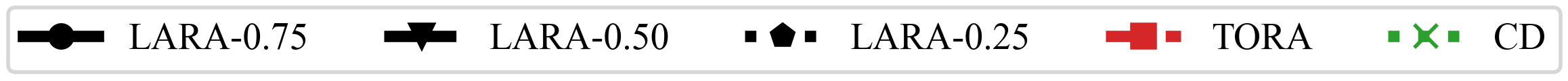}}\vspace{-1mm}\\
\includegraphics[width=0.44\linewidth]{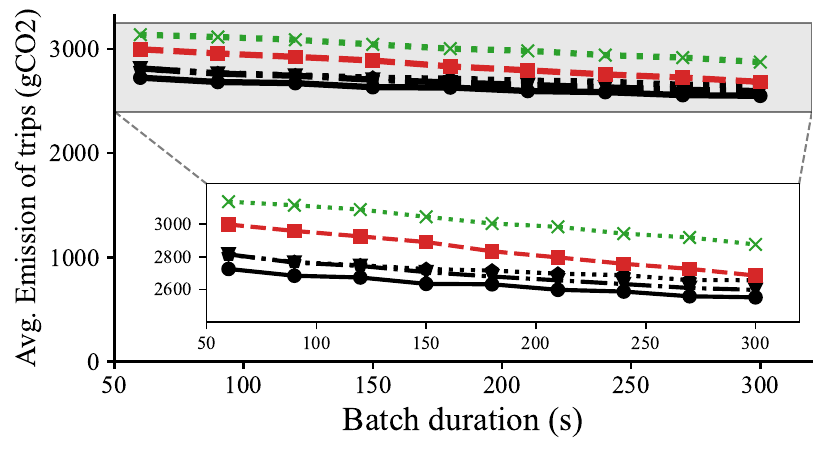} \hspace{5mm} &  \includegraphics[width=0.44\linewidth]{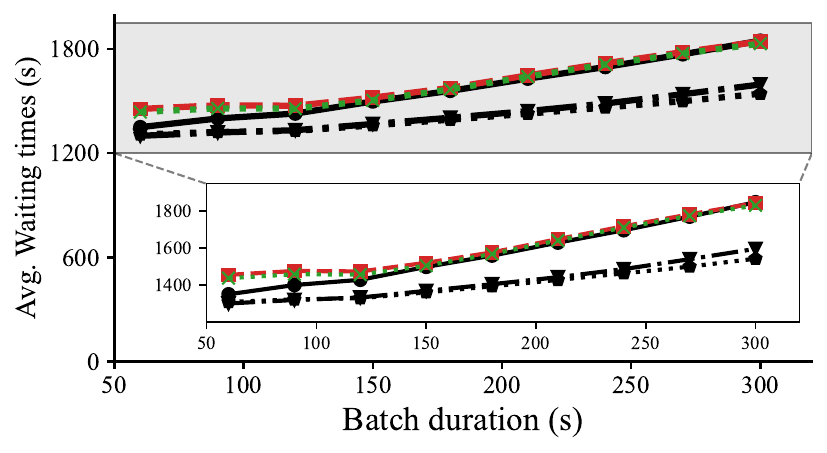}    
\end{tabular}
    \caption{Average trip emissions (left) and rider waiting times (right) as a function of batch duration for \algName and comparison algorithms. Higher values of $\alpha$ in \algName result in lower average emissions but come at the cost of increased rider waiting times.}
    \label{fig:dataset_batchDur} 
\end{figure*}

\subsection{Experiments on Real Dataset}
\label{sec:real-exp}
In this section, we evaluate the performance of \algName and baseline algorithms using the RideAustin dataset~\cite{rideaustin-dataset}. This dataset enables realistic simulations of ridesharing scenarios, allowing us to assess both average waiting times and emissions under various conditions. We use ride request data from the first week of December 2016, which includes 29,850 ride requests. To simulate high-demand conditions, we limit the number of available drivers so that the driver-to-request ratio is approximately 40\%—compared to the typical 50\% of real world conditions~\cite{uber-lyft-data1}. As in the synthetic experiments, we test deadhead distance limits of 1, 2, 5, 10, 15, and 30 km, set $Q_{\max} = 240$, and evaluate three versions of \algName with $\alpha$ values of 0.25, 0.5, and 0.75.

We report the average emissions per trip and the average rider waiting times for various batch durations. The results are shown in Figure~\ref{fig:dataset_batchDur}. As illustrated, increasing the batch duration leads to lower average emissions but longer waiting times. This trend is consistent with the results from the synthetic dataset (Figure~\ref{fig:synthetic_batchDur}). Across all batch durations, the three versions of \algName outperformed the comparison algorithms, achieving lower average emissions and shorter waiting times. Specifically, compared to \cd, \algName with $\alpha = 0.75$, $\alpha = 0.5$, and $\alpha = 0.25$ reduced emissions up to $11.1\%$, $11.3\%$, and $13.9\%$, respectively, while \TORA reduced emissions up to $6.6\%$. Additionally, \algName with $\alpha=0.25$ achieved between $8.7\%$ and $15.7\%$ shorter waiting times compared to \cd across the different batch durations.
\begin{table}[!t]
    \centering
    \caption{Percentage of rides assigned to drivers with low and high emission vehicles for \algName, \TORA, and \cd.}
    \begin{tabular}{lccccc}
               & \algName-0.75 & \algName-0.5 & \algName-0.25 & \TORA& \cd \\
        \hline \hline
        Low\ \ emission vehicles &  12.7  & 15.9  & 18.3 &  12.9  & 19.4   \\
        \hline
        High emission vehicles & 15.9  & 16.5  & 20.7 &  18.9  & 18.9   \\
        \hline
    \end{tabular}
    \label{tab:ridePerc}
\end{table}
\begin{table}[!t]
    \centering
    \caption{Percentage of distances traveled by drivers as deadhead miles for low, and high emission vehicles.}
    \begin{tabular}{lccccc}
               & \algName-0.75 & \algName-0.5 & \algName-0.25 & \TORA& \cd \\
        \hline \hline
        Low\ \ emission vehicles &  52.8  & 51.7  & 49.2 &  55.4  & 51.8   \\
        \hline
        High emission vehicles & 51.9  & 50.5  & 48.1 &  53.8  & 51.9   \\
        \hline
    \end{tabular}
    \label{tab:deadhead_to_trip}
\end{table}

\noindent
\emph{\textbf{Key takeaway.} Similar to the synthetic dataset results, increasing the batch assignment duration reduces average trip emissions but increases rider waiting times. Moreover, higher $\alpha$ values in \algName lead to lower emissions but longer waiting times.}

Finally, we report the percentage of ride requests performed by drivers with relatively low, and high emission vehicles to assess the performance of \algName and other comparison algorithms in providing fair ride assignment across different vehicles. To this end, we categorize vehicles with unit emission of less than $150 gCO2$ as a low emission vehicles and vehicles with unit emission of higher than $250 gCO2$ a high emission vehicles. Using this categorization approach, $24\%$ of dataset drivers categorized to low emission vehicles and $26\%$ categorized to high emission vehicles. The fraction of ride requests performed by low/high emission vehicles must be very similar to the fraction of those vehicles in the platform under control of fully fair algorithm.

In Table~\ref{tab:ridePerc}, and Table~\ref{tab:deadhead_to_trip} we present the percentage of ride requests fulfilled by drivers of low- and high-emission vehicles, along with the percentage of their total distance spent as deadhead miles. The results indicate that algorithms like \algName with high $\alpha$ values, or \TORA, assign fewer rides to low-emission vehicles. This occurs because these algorithms prioritize minimizing deadhead distances over reducing  total emissions of trips, causing low-emission vehicles to spend more of their total distance on deadhead miles. For example, under ride assignment of \TORA, $55.4\%$ of the distance traveled by low-emission vehicles was due to deadhead miles, compared to $49.2\%$ under \algName with $\alpha=0.25$ and $51.8\%$ under \cd. During high-demand periods, when all drivers are consistently busy, low-emission vehicles tend to travel more deadhead miles, resulting in fewer ride assignments. This underscores the trade-off between fair ride allocation and optimizing carbon reduction in ridesharing platforms.

\noindent
\emph{\textbf{Key takeaway.} The ride assignment of \algName indicates that reducing emissions requires a trade-off in equity among drivers, as those with low-emission vehicles typically handle rides with longer deadhead distances, resulting in a lower number of ride requests during peak demand hours.}

\section{Related Work}
\label{sec:related_work}
Recent advances in ridesharing systems have been fueled by a combination of theoretical modeling, empirical analysis, and data-driven techniques. From a theoretical standpoint, prior work has focused on modeling platform dynamics and designing mechanisms to improve operational efficiency. For example,~\citet{sadowsky2017impact} study the causal effect of ridesharing platforms on public transit ridership using a regression discontinuity approach. \citet{afeche2022ride} develop a game-theoretic model to characterize equilibrium behavior in ridesharing platforms, while~\citet{hu2022surge} and~\citet{castillo2022matching} investigate surge pricing mechanisms that dynamically adjust fares and wages to balance supply and demand, thereby reducing inefficient driver movement. Other efforts propose incentive-compatible payment schemes under non-stationary conditions~\cite{garg2022driver}, and spatio-temporal pricing strategies that promote spatial and temporal continuity in driver engagement~\cite{ma2020spatio}.

Environmental sustainability in ridesharing has also garnered significant attention. Several empirical studies quantify the potential emission reduction enabled by shared mobility. Jalali et al.\cite{jalali2017investigating} show that ridesharing platforms can reduce transportation emissions by up to 24\% through increased vehicle utilization. A complementary analysis using historical data from Tokyo reveals that up to 27\% of vehicle kilometers traveled could be saved through shared rides, resulting in emissions reductions of up to 84\%. More recent work has focused on algorithmic approaches to minimize both in-trip and deadhead emissions\cite{li2024shared,sahebdel2025lead,sahebdel2023data,sahebdel2024holistic}, while jointly optimizing service-level objectives such as rider waiting time and equitable assignment across vehicles with heterogeneous emissions profiles. These studies highlight critical trade-offs between efficiency, fairness, and sustainability.

A parallel line of work addresses matching and vehicle allocation under uncertainty. \citet{feng2021two} propose a two-stage stochastic model to manage uncertain rider and driver availability. Cooperative game-theoretic approaches have been used for coalition formation \cite{bistaffa2017cooperative}, and network-based models aim to minimize fleet size while preserving service quality \cite{vazifeh2018addressing}. Preference-aware strategies further enhance match quality in dynamic systems \cite{bian2019mechanism}.

Recent research also leverages machine learning to improve operational efficiency. \citet{chen2016dynamic} show surge pricing boosts driver supply, and \citet{bongiovanni2022machine} develop a learning-based pipeline for autonomous ridesharing. Other studies address delay mitigation \cite{fielbaum2020unreliability}, real-time dispatching and repositioning \cite{riley2020real}, and rider satisfaction through behavior-aware modeling~\cite{yatnalkar2020enhanced}. Spatio-temporal forecasting and personalized behavior encodings further improve demand prediction \cite{tang2021multi, liu2021behavior2vector}.

\section{Conclusion}
\label{sec:conclusion}

In this paper, we introduced the problem of online deadhead control and formulated it as an optimization problem aimed at reducing the expected carbon emissions of ridesharing platforms while maintaining low rider wait times. We proposed \algName, an online algorithm designed to achieve near-optimal solutions by dynamically adjusting deadhead distance limits based on the number of ride requests in the assignment queue. Along with providing a theoretical analysis of \algName's performance relative to the optimal offline algorithm, we conducted extensive experiments using both synthetic and real-world datasets to evaluate its effectiveness. Our results show that \algName outperforms state-of-the-art algorithms across a variety of scenarios, with its advantages becoming particularly evident during high-demand periods. In future work, we plan to develop an online algorithm with a worst-case performance guarantee that minimizes both emissions and wait times while ensuring fair ride assignments across different drivers.


\section{Acknowledgment}
This research was supported in part by NSF grants CAREER 2045641, CPS-2136199, CNS-2106299, CNS-2102963, CSR-1763617, CNS-2106463, and CNS-1901137. We acknowledge their financial assistance in making this project possible.


\bibliographystyle{plainnat}
\bibliography{paper}


\end{document}